\newtheorem{theorem}{Theorem}[section]
\newtheorem{lemma}{Lemma}[section]
\newcommand{\qed}{\hfill\hbox{\rlap{$\sqcap$}$\sqcup$}}
\newenvironment{proof}{\noindent \emph{Proof.\,}}{\qed}
\algnewcommand\algorithmicreturn{\textbf{return}}
\algnewcommand\RETURN{\algorithmicreturn}
\algnewcommand\algorithmicprocedure{\textbf{procedure}}
\algnewcommand\PROCEDURE{\item[\algorithmicprocedure]}%
\algnewcommand\algorithmicendprocedure{\textbf{end procedure}}
\algnewcommand\ENDPROCEDURE{\item[\algorithmicendprocedure]}%
\algnewcommand{\algvar}[1]{{\text{\ttfamily\detokenize{#1}}}}
\algnewcommand{\algarg}[1]{{\text{\ttfamily\itshape\detokenize{#1}}}}
\algnewcommand{\algproc}[1]{{\text{\ttfamily\detokenize{#1}}}}
\algnewcommand{\algassign}{\leftarrow}
\title{The Two-Squirrel Problem and Its Relatives} 
\author{Sergey Bereg\footnote{Department of Computer Science, University of Texas at Dallas, Richardson, TX 75080, USA. Email: {\tt besp@utdallas.edu}.}
\and
Yuya Higashikawa\footnote{Graduate School of Information Science, University of Hyogo, Kobe, Japan. Email: {\tt higashikawa@gsis.u-hyogo.ac.jp}.} 
\and
Naoki Katoh \footnote{Graduate School of Information Science, University of Hyogo, Kobe, Japan. Email: {\tt naoki.katoh@gsis.u-hyogo.ac.jp}.}
\and
Manuel Lafond \footnote{Department of Computer Science, Université de Sherbrooke, Sherbrooke, Quebec J1K 2R1, Canada. Email: {\tt manuel.lafond@usherbrooke.ca}.}
\and
Yuki Tokuni\footnote{Graduate School of Information Science, University of Hyogo, Kobe, Japan. Email:  {\tt ad21o040@gsis.u-hyogo.ac.jp}.}
\and
Binhai Zhu\footnote{Gianforte School of Computing, Montana State University, Bozeman, MT 59717, USA. Email: {\tt bhz@montana.edu}.}
}
\date{}
\begin{document}

\maketitle

\begin{abstract}
In this paper, we start with a variation of the star cover problem called the Two-Squirrel problem.
Given a set $P$ of $2n$ points in the plane, and two sites $c_1$ and $c_2$,
compute two $n$-stars $S_1$ and $S_2$ centered at $c_1$ and $c_2$ respectively
such that the maximum weight of $S_1$ and $S_2$ is minimized.
This problem is strongly NP-hard by a reduction from Equal-size Set-Partition with Rationals. 
Then we consider two variations of
the Two-Squirrel problem, namely the Two-MST and Two-TSP problem, which are both NP-hard. The NP-hardness for the latter is obvious while the former needs a non-trivial reduction from Equal-size Set-Partition with Rationals.
In terms of approximation algorithms, 
for Two-MST and Two-TSP we give factor
3.6402 and $4+\varepsilon$ approximations respectively. 
Finally, we also show some interesting polynomial-time solvable cases for Two-MST.
\end{abstract}

\section{Introduction}

Imagine that two squirrels try to fetch and divide $2n$ nuts to their nests. Since each time a squirrel
can only carry a nut back, this naturally gives the following problem: they should travel along the edges of an $n$-star, centered at the corresponding nest, such that each leaf (e.g., nut) is visited exactly once (in and out) and the maximum distance they visit should be minimized (assuming that they travel at the same speed, there is no better way to enforce the fair division under such a circumstance).
See Figure 1 for an illustration.
\begin{figure}[htbp]
    \centering
    \includegraphics[width=0.70\textwidth]{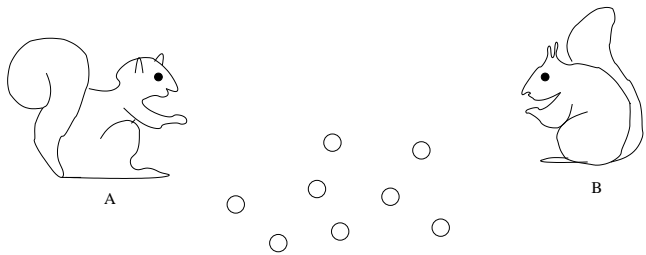}
    \caption{Two squirrels $A$ and $B$ try to fetch and divide $2n$ nuts.}
    \label{fig1}
\end{figure}

A star $S$ is a tree where all vertices are leaves except one (which is called the {\em center} of the star). An $n$-star is a star with $n$ leaf nodes. When the edges in $S$ carry weights, the weight of $S$ is the sum of weights of all the edges in $S$. Given two points $p,q$ in the plane, with $p=(x_p,y_p)$ and $q=(x_q,y_q)$, we define the Euclidean distance between $p,q$ as $d(p,q)=|pq|=\sqrt{(x_p-x_q)^2+(y_p-y_q)^2}$ and the $L_1$ or Manhattan distance between them is defined as $d_1(p,q)=|x_p-x_q|+|y_p-y_q|$. 

Formally, the {\em Two-Squirrel} problem can be defined as: Given a set $P$ of $2n$ points in the plane and two extra point sites $c_1$ and $c_2$, compute two $n$-stars $S_1$ and $S_2$ centered at $c_1$ and $c_2$ respectively such that each point $p_j\in P$ is a leaf in exactly one of $S_1$ and $S_2$; moreover, the maximum weight of $S_1$ and $S_2$ is minimized. Here the weight of an edge $(c_i,p_j)$ in $S_i$ is $w(c_i,p_j)=d(c_i,p_j)$ for $i=1,2$. 
One can certainly consider a variation of the two-squirrel problem where the points are given as pairs $(p_{2i-1},p_{2i})$ for $i=1,...,n$, and the problem is to split all the pairs (i.e., one to $c_1$ and the other to $c_2$) such that maximum weight of the two resulting stars is minimized. We call this version {\em Dichotomy Two-Squirrel}. 

A more general (and probably more interesting) version of the problem is when the two squirrels only need to split the $2n$ nuts and each could travel along a Minimum Spanning Tree (MST) of the $n$ points representing the locations of the corresponding nuts, which we call the {\em Two-MST} problem: Compute a partition of $P$ into $n$ points each, $P_1$ and $P_2$, such that the maximum weight of the MST of $P_1\cup\{c_1\}$ and
$P_2\cup\{c_2\}$, i.e., $\max\{w(P_1\cup\{c_1\}),w(P_2\cup\{c_2\})\}$, is minimized.
Similarly, we could replace MST with TSP to have the {\em Two-TSP} problem.

Covering a (weighted) graph with stars or trees (to minimize the maximum weight of them) is a well-known NP-hard problem in combinatorial optimization \cite{DBLP:conf/random/EvenGKRS03}, for which constant factor approximation is known. Recently, bi-criteria approximations are also reported \cite{DBLP:conf/approx/GamlathG20}. 
In the past, a more restricted version was also investigated on graphs \cite{DBLP:conf/tamc/ZhaoZ07}. 
Our Two-Squirrel problem can be considered a special geometric star cover problem where the two stars are disjoint though are of the same cardinality, and the objective function is also to minimize the maximum weight of them.

It turns out that, when the coordinates of points are rationals, both Two-Squirrel and Dichotomy Two-Squirrel are strongly NP-hard (under both the Euclidean and $L_1$ metric, though we focus only on the Euclidean case in this paper). The proofs can be directly from two variations of the famous Set-Partition problem \cite{DBLP:books/fm/GareyJ79,karp1972}, namely, Equal-Size Set-Partition with Rationals and Dichotomy Set-Partition with Rationals, which are both strongly NP-hard with the recent result
by Wojtczak \cite{DBLP:conf/csr/Wojtczak18}.
We then show that Equal-size Set-Partition with Rationals can be reduced to Two-MST in polynomial time, which indicates that Two-MST is NP-hard. (Note
that in this proof, the constructed points have real coordinates.) On the
other hand, Two-TSP is obviously NP-hard as the TSP problem is NP-hard.

For the approximation algorithms, both Two-Squirrel and Dichotomy Two-Squirrel admit a FPTAS (note that this does not contradict the known result that a strongly NP-hard problem with an integral objective function cannot be approximated with a FPTAS unless P=NP, simply because our objective functions are not integral). This can be done by first designing
a polynomial-time dynamic programming algorithm through
scaling and rounding the distances to integers, obtaining the corresponding optimal solutions, and
then tracing back to obtain the approximate solutions. The approximation algorithm for Two-MST is more tricky; in fact, with a known lower bound by Chung and Graham related to the famous Steiner Ratio Conjecture \cite{CG85}, we show that a factor 3.6402
approximation can be obtained. Using a similar method, we show that Two-TSP can be approximated with a factor of $4+\varepsilon$.

In the end, we show two interesting polynomial-time solvable cases: when all the points in $P$ and the two sites are on the X- and Y-axis, the problems are polynomially solvable under both the $L_1$ and $L_2$ distances. The running times are $O(n^4)$ and $O(n^{13})$ respectively.

The paper is organized as follows. In Section 2, we give some necessary definitions.
In Section 3, we present our NP-hardness result for the Two-MST problem. In Section 4 we present the approximation algorithms for Two-MST and Two-TSP. In Section 5, we show the special polynomial-time solvable cases. And in Section 6 we conclude the paper.

\section{Preliminaries}

In this section, we first define
Equal-size Set-Partition for Rationals and Dichotomy Set-Partition for Rationals which are generalizations of Set-Partition ~\cite{DBLP:books/fm/GareyJ79,karp1972}.

In Dichotomy Set-Partition with Rationals, we are given a set $E$ of $2n$ positive rationals numbers (rationals, for short)
with $E=E'_1\cup E'_2\cup\cdots E'_n$ such that $E'_i=\{a_{i,1},a_{i,2}\}$ is a 2-set (or, $E'_i=(a_{i,1},a_{i,2})$, i.e., as a pair) and
the problem is to decide whether $E$ can be partitioned into $E_1$ and $E_2$
such that every two elements in $E'_i$ is partitioned into $E_1$ and $E_2$
(i.e., one in $E_1$ and the other in $E_2$ --- clearly $|E_1|=|E_2|=n$) and $\sum_{a\in E_1}a=\sum_{b\in E_2}b$. (Equal-size Set-Partition with Rationals is simply a special case of Dichotomy Set-Partition with Rationals where $E$ is given as a set of $2n$ rationals, i.e., $E=\{a_1,a_2,\cdots,a_{2n}\}$ and $E'_i$'s are not given.)

With integer inputs, both Dichotomy Set-Partition and Equal-size Set-Partition, like their predecessor Set-Partition, can be shown to be
weakly NP-complete. Recently, Wojtczak proved that even with rational inputs, Set-Partition is strongly NP-complete \cite{DBLP:conf/csr/Wojtczak18}.
In fact, the proof by Wojtczak implied that Dichotomy Set-Partition and Equal-size Set-Partition are both strongly NP-complete --- because in this reduction from a special 3-SAT each pair $x_i$ and $\bar{x}_i$ are associated with two unique rational numbers which must be split in two parts. So we re-state this theorem by Wojtczak.

\begin{theorem}
Equal-size Set-Partition with Rationals and Dichotomy Set-Partition with Rationals are both strongly NP-complete.
\end{theorem}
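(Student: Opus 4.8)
The plan is to establish membership in NP for both problems and then to obtain strong NP-hardness essentially for free from Wojtczak's reduction~\cite{DBLP:conf/csr/Wojtczak18}. For membership, a candidate certificate is simply the bipartition $(E_1,E_2)$ itself, which has size linear in the input. Verifying that $|E_1|=|E_2|=n$, that (in the Dichotomy version) each prescribed pair $E'_i$ contributes exactly one element to $E_1$ and one to $E_2$, and that $\sum_{a\in E_1}a=\sum_{b\in E_2}b$ are all routine arithmetic checks on rationals whose numerators and denominators are polynomially bounded, hence computable in polynomial time. Thus both problems lie in NP.

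For hardness, I would start from Wojtczak's strongly-NP-complete reduction producing, from a (special) 3-SAT formula $\varphi$ on variables $x_1,\dots,x_n$, a Set-Partition instance over rationals whose magnitudes are bounded by a polynomial in the size of $\varphi$. The crucial structural feature I would extract and make explicit is that this reduction associates to each variable $x_i$ a dedicated pair of rationals $a_{i,1},a_{i,2}$ --- one ``encoding'' the literal $x_i$ and the other the literal $\bar{x}_i$ --- and that in every balanced partition these two values must lie on opposite sides (this is precisely what encodes a consistent truth assignment: a variable cannot be simultaneously true and false). Establishing this \emph{pair-separation} invariant from the internal workings of Wojtczak's gadgets is the step I expect to be the main obstacle, since it requires re-reading that construction and confirming that no valid Set-Partition solution can place both $a_{i,1}$ and $a_{i,2}$ on the same side; the remaining arguments are then short.

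Granting the invariant, both results follow immediately from the same instance. For Dichotomy Set-Partition with Rationals, I would simply supply the pairing $E'_i=\{a_{i,1},a_{i,2}\}$ as part of the input. A solution to the Dichotomy instance is by definition a balanced partition separating every pair, which is exactly a satisfying assignment of $\varphi$, and conversely; hence the reduction is a genuine (parsimonious) reduction and strong NP-hardness transfers verbatim. For Equal-size Set-Partition with Rationals, I would use the very same $2n$ rationals but forget the pairing. By the pair-separation invariant, any balanced partition must put exactly one of $\{a_{i,1},a_{i,2}\}$ on each side for every $i$, so it automatically satisfies $|E_1|=|E_2|=n$; thus the Equal-size solutions coincide with the ordinary Set-Partition solutions of the instance, and strong NP-hardness again carries over.

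If verifying the invariant inside Wojtczak's gadgets turns out to be delicate, I would fall back on a generic balancing device that makes the Equal-size result unconditional: shift every rational by a common integer $M>T$, where $T=\sum_i(a_{i,1}+a_{i,2})$ is the total. For the shifted instance a balanced partition must make $E_1$ satisfy $\sum_{E_1}a = T/2+(n-|E_1|)M$; since the left-hand side of this equation lies in $[0,T]$ (the inputs being positive) while $M>T$, any $|E_1|\neq n$ forces the right-hand side outside $[0,T]$, a contradiction. Hence $|E_1|=n$ and $\sum_{E_1}a=T/2$, so the shifted instance reduces exactly to the original Set-Partition problem. Because $T$ is polynomially bounded under Wojtczak's reduction, $M$ and all shifted rationals remain polynomially bounded, preserving strong NP-hardness; the pairing needed for the Dichotomy version can then be reinstated as above.
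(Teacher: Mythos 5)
Your proposal takes essentially the same route as the paper: the paper likewise observes that Wojtczak's reduction from a special 3-SAT associates each literal pair $x_i,\bar{x}_i$ with two rationals that every valid partition must separate, so that both the equal-size and the dichotomy constraints come for free, and it correctly identifies this pair-separation property as the only point needing verification inside Wojtczak's construction. Your fallback shifting device (adding a common $M>T$ to force $|E_1|=n$ while keeping all values polynomially bounded) is a sound and welcome supplement that makes the equal-size claim unconditional, but it is not part of the paper's argument.
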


It is straightforward to reduce Equal-size Set-Partition with Rationals
to Two-Squirrel (with rational coordinates) and Dichotomy Set-Partition with Rationals to Dichotomy Two-Squirrel (with rational coordinates), as each point is directly connected to either $c_1$ or $c_2$. Hence, both
Two-Squirrel and Dichotomy Two-Squirrel are strongly NP-hard when the
coordinates of the input points are rational.

Coming to Two-MST, the story is quite different. Since the structure of
an MST is not fixed (i.e., even if we know that two points $u,v\in P$ belong to
$T_1$, the MST of $P_1\cup\{c_1\}$, we do not know how $u,v$ are connected before $T_1$ is actually computed). Nonetheless, we show in the next section that Two-MST is NP-hard.

\section{NP-hardness for Two-MST}

In this section, we prove that the Two-MST problem (2-MST for short), is NP-hard. (Our construction requires that the coordinates of the points are real numbers.)
Recall that in the 2-MST problem, one is given a set $P$ of $2n$ points in the plane, together with two point sites $c_1$ and $c_2$, the objective is to compute two MST $T_1$ and $T_2$ each containing $n$ points in $P$ (and $c_1$ and $c_2$ respectively) such that the maximum weight of $T_1$ and $T_2$, $\max\{w(T_1),w(T_2\}$, is minimized. Here the weight of any edge $(p_i,p_j)$ or $(p_i,c_k)$ in $T_k,k=1..2$, is the Euclidean distance
between the two corresponding nodes.
We reduce Equal-size Set-Partition for Rationals \cite{DBLP:conf/csr/Wojtczak18} to 2-MST in the following. Note that in the proof by Wojtczak \cite{DBLP:conf/csr/Wojtczak18}, a set $S$ of $2n$ rationals, with a total sum of $2n$, were constructed such that the only partition is to partition them into two equal-size sets with $n$ rationals, each having a sum of value $n$. 

\begin{theorem}
Two-MST is NP-hard.
\end{theorem}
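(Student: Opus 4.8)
The plan is to reduce from Equal-size Set-Partition with Rationals, using the special instances produced by Wojtczak's construction: a multiset of $2n$ positive rationals $a_1,\dots,a_{2n}$ whose total is $2n$, for which any equal-sum partition is automatically into two blocks of exactly $n$ elements, each of sum $n$. My aim is to build, in polynomial time, a planar point set $P$ together with sites $c_1,c_2$ so that the weight of \emph{every} minimum spanning tree of a side is an affine function of the rationals placed on that side, namely $w(T_i)=C+\lambda s_i$, where $s_i=\sum_{a_j\in P_i}a_j$, $\lambda$ is a tiny scaling factor, and $C$ is a large common offset shared by both sides. Because the 2-MST problem already forces $|P_1|=|P_2|=n$ and because $s_1+s_2$ is constant, the quantity $\max\{w(T_1),w(T_2)\}=C+\lambda\max\{s_1,s_2\}$ is minimized exactly when $s_1=s_2=n$. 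Fixing the decision threshold at $C+\lambda n$ then makes the 2-MST instance a \textsc{yes}-instance if and only if the original partition instance is.

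To realise weights that are affine in the sum despite the freedom of the MST, I would encode each rational by a \emph{forced edge}. Each value point $p_i$ is placed so that its unique strictly-nearest neighbour is a prescribed attachment point at distance $\mathrm{base}+\lambda a_i$, while $p_i$ is kept at distance at least $\Delta\gg\mathrm{base}+\lambda\max_j a_j$ from every other point. Since the strictly minimum-weight edge incident to a vertex lies in every MST, the edge encoding $a_i$ is then present in each MST of the side containing $p_i$, contributing exactly its length. Around these forced edges I would lay a rigid skeleton---the attachment points, their mutual connections, and the two links joining $c_1$ and $c_2$ to it---whose total length equals the same constant on each side no matter which $n$ value points are chosen. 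The separation of scales $\mathrm{base},\Delta\gg\lambda\max_j a_j$ guarantees that the perturbations $\lambda a_i$ never reorder any nearest-neighbour relation, so the combinatorial shape of the skeleton, and hence the offset $C$, is the same across all admissible partitions.

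For correctness, the forward direction is immediate: given a balanced equal-sum partition, assigning the corresponding value points to $c_1$ and $c_2$ yields $w(T_1)=w(T_2)=C+\lambda n$, meeting the threshold. For the converse I would show that any feasible solution satisfies $w(T_i)=C+\lambda s_i$ exactly---the skeleton always costs $C$ and the forced edges always contribute $\lambda s_i$---so meeting the threshold $C+\lambda n$ forces $s_1=s_2=n$ and thus recovers an equal-sum partition. I would also note, as anticipated in the section preamble, that pinning points at the prescribed Euclidean distances from the skeleton generally requires irrational coordinates, which is why the construction uses real rather than rational coordinates.

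The main obstacle is precisely the difficulty flagged earlier: the structure of an MST is not fixed, so I must prove that the weight is genuinely affine in $s_i$ and independent both of which $n$ points are chosen and of how the MST wires them together. Concretely, I must rule out that some inter-point shortcut beats a forced edge, that the skeleton cost can vary with the partition, and---most delicately---that it is ever advantageous to split a value point from its attachment point across the two sides; such a split would reconnect the point through its far second-nearest neighbour at cost $\ge\Delta$, so it must be shown to be strictly dominated and hence never chosen in an optimal solution. Because the disk-packing obstruction prevents forcing a clean star (any $2n$ points within distance $O(\mathrm{base})$ of a site contain a pair at distance $O(\mathrm{base}/\sqrt{n})$), these invariances cannot be obtained by brute geometric separation alone and instead rest on a careful gap argument, with the scale separation chosen so that every forced edge strictly dominates all of its alternatives.
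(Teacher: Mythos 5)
Your high-level target --- arranging that $w(T_i)=C+\lambda s_i$ so that the cardinality constraint plus $\max\{s_1,s_2\}$ recovers Equal-size Set-Partition --- is exactly the paper's target, and the choice of Wojtczak's instances is the same. The gap is in the gadget that is supposed to realise it. You encode each rational by a value point whose \emph{unique} strictly-nearest neighbour is a single attachment point, with everything else at distance $\ge\Delta$. But the attachment points are themselves elements of $P$ and must be assigned to one side or the other. If each attachment point's side is fixed by the skeleton, then (as you yourself argue) splitting a value point from its attachment point costs $\ge\Delta$ and is never optimal --- which means every value point is forced onto the same side as its attachment point, no freedom remains, and the instance cannot encode the partition choice at all. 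If instead the attachment points migrate together with their value points, then your claim that the skeleton's ``total length equals the same constant on each side no matter which $n$ value points are chosen'' is doing all the work and is unsubstantiated: making the MST cost of an arbitrarily chosen half of a point set invariant over all $\binom{2n}{n}$ choices is essentially the original difficulty, not a routine scale-separation argument. Either way the converse direction ($w(T_i)=C+\lambda s_i$ exactly) does not follow from what you have written.

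The paper's construction shows what is needed to escape this dilemma: each value point $p_i$ is placed \emph{equidistant from cheap attachments on both sides} --- it is the centre of a $24a_i\times 10a_i$ rectangle, at distance $13a_i$ from two corners forced into $T_1$ and two forced into $T_2$, while the direct top (resp.\ bottom) edge costs $24a_i$. Whichever tree absorbs $p_i$ pays $2\cdot 13a_i$ instead of $24a_i$, a marginal cost of exactly $2a_i$, and the rest of the skeleton (corners chained by isosceles trapezoids of side $2t$, plus a large terminal triangle) has a \emph{forced} partition into a top half and a bottom half of identical cost. That symmetric, two-sided attachment is the missing idea in your proposal; with only a one-sided forced edge per value point the reduction cannot work, so you would need to redesign the gadget along these lines before the gap argument you sketch has anything to bound.
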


\begin{proof}
We reduce Equal-size Set-Partition with Rationals to Two-MST.
Note that, given $E=\{a_1,a_2,\cdots,a_{2n}\}$ where each $a_i~(i=1..2n)$ is a rational number and $\sum_ia_i=2t$, for Set-Partition with Rationals we need to
partition $E$ into two sets $E_1$ and $E_2$ such that $|E_1|=|E_2|$ and the rationals
in $E_1$ and $E_2$ sum the same, i.e., $t=\sum_{a\in E_1}a=\sum_{b\in E_2}b$. We construct $10n+4$ points in $P$ as well as 2 point sites $c_1$ and $c_2$. We first show our ideas, then follow with the construction of these
points with coordinates --- mostly along the X-axis.

The building block of each $a_i$ is a rectangle $B_i=(b_{i,1},b_{i,2}, b_{i,4},b_{i,3})$ in clockwise order with $b_{i,1}$ being the top-left corner point; in addition, $p_{i}$ (on the X-axis) is the center of this rectangle $B_i$ (see Fig.2 (II)). In other words, each $a_i$ will be transformed into a group of 5 points. The horizontal edge length of 
$B_i$ is $24a_i$ and the height of $B_i$ is $10a_i$; hence the distance from the center $p_i$ to any of the corner point is $13a_i$. The crucial point is that, 
at $B_i$, if $T_1$ and $T_2$ start at $b_{i,1}$ and $b_{i,3}$ respectively, then one of them would include $p_i$ and ending at $b_{i,2}$ and $b_{i,4}$ respectively (or vice versa). As a matter of fact, the difference of the parts of $T_1$ and $T_2$ spanning $B_i\cup \{p_i\}$ is $2\times 13a_i-2\times 12a_i=2a_i$.
We place the $B_i$'s in a way such that the right edge of $B_i$ and the left edge of $B_{i+1}$ form an isosceles trapezoid $T_i$, symmetric along the X-axis, such that the non-vertical edges have a length of $2t$ (note that $2t>a_i, 2t>a_{i+1}$). As a matter of fact, going from left to right, if $T_1$ (resp. $T_2$) includes $b_{i,2}$ (resp. $b_{i,4}$), then the shortest paths from them to reach $B_{i+1}$ are $<b_{i,2},b_{i+1,1}>$ and $<b_{i,4},b_{i+1,3}>$ respectively, which both have a length of $2t$.

\begin{figure}[htbp]
\psfrag{T0}{$T_{0}$}
\psfrag{T1}{$T_{1}$}
\psfrag{T2}{$T_{2}$}
\psfrag{p1}{$p_{1}$}
\psfrag{B1}{$B_{1}$}
\psfrag{p2}{$p_{2}$}
\psfrag{B2}{$B_{2}$}
\psfrag{Bi}{$B_{i}$}
\psfrag{bi1}{$b_{i,1}$}
\psfrag{bi2}{$b_{i,2}$}
\psfrag{bi3}{$b_{i,3}$}
\psfrag{bi4}{$b_{i,4}$}
\psfrag{pi}{$p_{i}$}
\psfrag{10ai}{$10a_{i}$}
\psfrag{24ai}{$24a_{i}$}
\psfrag{c1}{$c_1$}
\psfrag{c2}{$c_2$}
\psfrag{2t}{$2t$}
    \centering
    \includegraphics[width=0.65\textwidth]{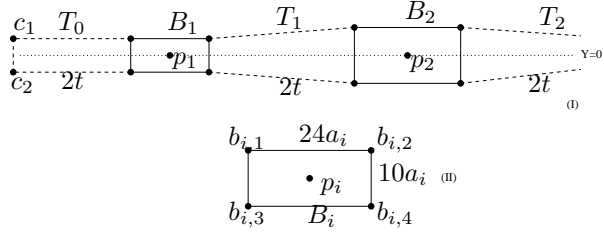}
    \caption{Illustration for the reduction from Equal-size Set-Partition with Rationals to 2-MST, the left part (I). Block $B_i$, note that the distance from the center $p_i$ to any of the 4 corners is $13a_i$ (II).}
   \label{fig3}
\end{figure}

At the end of $B_{2n}$, we construct four points $b_{2n+1}=b_{2n+2}$ (on the X-axis), $q$ and $r$. They form a regular triangle with $d(b_{2n+1},q)=d(b_{2n+1},r)=d(q,r)=4nt$. As the distance $d(q,r)$ is so large (compared with the optimal solution for 2-MST), the optimal solution
must split them in a way such that $\{b_{2n+1},q\}\in T_1$ and $\{b_{2n+2},r\}\in T_2$ or vice versa. Moreover, we can set $d(b_{2n,3},b_{2n+1})=d(b_{2n,4},b_{2n+1})=4nt$; i.e., $<b_{2n,3},b_{2n+1},b_{2n,4}>$ form an isoceles triangle with long edge
length $4nt$ (or we can say $<b_{2n,3},b_{2n+1},b_{2n+2},b_{2n,4}>$ form a degenerate isoceles trapezoid $T_{2n}$ with edge length $4nt$.
Obviously, in the optimal solution $b_{2n,3}$ and $b_{2n,4}$ must be split into $T_1$ and $T_2$ respectively, or vice versa.

We briefly discuss the coordinates of the points constructed; in fact, they could be constructed in an incremental way. First set $c_1=(0,10a_1), c_2=(0,-10a_1)$, and construct the group of 5 points as the vertices and center of $B_1$, with $b_{1,1}=(2t,10a_1)$, $b_{1,2}=(2t+24a_1,10a_1)$, $b_{1,3}=(2t,-10a_1)$, $b_{1,4}=(2t+24a_1,-10a_1)$ and $p_1=(2t+12a_1,0)$.
Then we construct $T_i$ and $B_{i+1}\cup\{p_i\}$ for $i=1$ to $2n$ incrementally. WLOG, let $a_i\leq a_{i+1}$ and the coordinates of $b_{i,2}$ and $b_{i,4}$ be $b_{i,2}=(x_i,10a_i)$ and $b_{i,4}=(x_i,-10a_i)$ respectively. Then the coordinates of points in $B_{i+1}\cup \{p_{i+1}\}$ are
$$b_{i+1,1}=(x_i+\sqrt{(2t)^2-(10(a_{i+1}-a_i))^2},10a_{i+1}),$$,
$$b_{i+1,2}=(x_i+\sqrt{(2t)^2-(10(a_{i+1}-a_i))^2}+24a_{i+1},10a_{i+1}),$$
$$b_{i+1,3}=(x_i+\sqrt{(2t)^2-(10(a_{i+1}-a_i))^2},-10a_{i+1}),$$
$$b_{i+1,4}=(x_i+\sqrt{(2t)^2-(10(a_{i+1}-a_i))^2}+24a_{i+1},-10a_{i+1})$$ and
$$p_{i+1}=(x_i+\sqrt{(2t)^2-(10(a_{i+1}-a_i))^2}+12a_{i+1},0).$$
The coordinates for $b_{2n+1}$ and $b_{2n+2}$ are
$(x_{2n}+\sqrt{(4nt)^2-(10a_{2n})^2},0)$, and the coordinated of $q$ and $r$
are
$q=(x_{2n}+\sqrt{(4nt)^2-(10a_{2n})^2}+2\sqrt{3}nt, 2nt)$ and
$r=(x_{2n}+\sqrt{(4nt)^2-(10a_{2n})^2}+2\sqrt{3}nt, -2nt)$. 
Note that to the right of $B_1$, the points are virtually all having real coordinates. See Fig 2. (I) and Fig. 3 for the construction.

 
We show next that Equal-size Set-Partition with Rationals has a solution iff the 2-MST instance $P\cup\{c_1,c_2\}$ admits a solution with optimal weight of $(12n+2)t$. 

``If part'': If $E$ can be partitioned into $E_1$ and $E_2$ such that $\sum_{a\in E_1}a=\sum_{b\in E_2}b=t$, we show how to construct two MST's as follows. Up to $B_{2n}$, we include all the points above the X-axis to $T_1$ and all the points below the X-axis to $T_2$. For $p_i$'s, if $a_i\in E_1$ then we include $p_i$ in $T_1$, if $a_i\in E_2$ then we include $p_i$ in $T_2$ (each will incur a cost of $2a_i$).
We then include $b_{2n+1}$ and $b_{2n+2}$ (and $q$ and $r$) to $T_1$ and $T_2$ respectively.
Obviously we have $|T_1|=|T_2|=5n+3$, and the weight of them are both
$(12n+2)t$.

``Only-if part'': Now suppose that points in $P$ are partitioned
into $P_1$ and $P_2$ such that the MST's of $P_1\cup\{c_1\}$ and $P_2\cup\{c_2\}$ are $T_1$ and $T_2$ respectively, and the maximum weight of $T_1$ and $T_2$ is $(12n+2)t$. Following the previous argument, we must split $q$ and $r$ (hence also $b_{2n+1}$ and $b_{2n+2}$, and subsequently
$b_{2n,2}$ and $b_{2n,4}$) into $T_1$ and $T_2$ to have a weight less than $16nt$. Similarly, we need to split $b_{1,1}$ and $b_{1,3}$ into $T_1$ and $T_2$ as otherwise we would have a solution larger than $(12n+2)t$ --- since $d(c_1,b_{1,3})>d(c_1,b_{1,1})=2t$ and $d(c_2,b_{1,1})>d(c_2,b_{1,3})=2t$.
Likewise, not splitting $b_{1,1}$ and $b_{1,3}$ into $T_1$ and $T_2$, e.g.,
including both of them in $T_1$ or $T_2$, would incur a cost of $2t+10a_1>2t$, which would lead to a higher total cost.

We now show with induction that the current optimal solution (say $T_1$) for points up to $B_i$ is $2it$ (the major cost) plus the cost of including some center $p_j$'s ($1\leq j\leq i$); moreover, $T_1$ must include $b_{i,2}$, $T_2$ must include $b_{i,4}$ and the cost of the other MST $T_2$ is minimized. The basis is obvious: since $T_1$ must include $b_{1,1}$ and $T_2$ must include $b_{1,3}$, to reach the end of $B_1$ (i.e., $b_{1,2}$ and $b_{1,4}$), $T_1$ needs to include $b_{1,2}$ and $p_1$ to maintain the optimality of a local solution ($2t+26a_1$), and $T_2$ must include $b_{1,4}$ to have a cost of $2t+24a_1$. Note that if we let $T_1$ include $p_1$ and $b_{1,4}$, and $T_2$ include $b_{1,2}$, although the cost of $T_1$ remains the same ($2t+26a_1$), the cost of $T_2$ becomes $2t+26a_1$, which is not minimized anymore.

Now assuming the inductive hypothesis holds for $i$, let us consider $B_{i+1}$. In very much the same way, let the local optimal solution (say $T_1$) end at $b_{i,2}$, and $T_2$ end at $b_{i,4}$, with both the major cost being $2it$. Clearly, in covering points in $B_{i+1}$, $T_1$ (resp, $T_2$) should not include $b_{i+1,3}$ (resp. $b_{i+1,1}$) as that will increase the major cost to more than $2(i+1)t$ (since in $T_i$, $d(b_{i,2},b_{i+1,3})>d(b_{i,2},b_{i+1,1})=2t$ and
$d(b_{i,4},b_{i+1,1})>d(b_{i,4},b_{i+1,3})=2t$). Then, for the same argument as in the basis, if $T_1$ includes $b_{i+1,4}$ and $T_2$ includes $b_{i+1,2}$ then the cost of $T_2$ is not minimized.

At this point, it can be seen that the optimal solution boils down to
split $p_i$'s to $T_1$ and $T_2$. As we have $2n$ $p_i$'s and the
splitting of each $p_i$ would incur a cost of $2a_i$, by symmetry,
the optimal solution must split them into $T_1$ and $T_2$ such that each
would incur an additional cost of $2t$ (note that $\sum_{1\leq i\leq 2n}a_i=2t$), for a total cost of $2(2n)t+2t+(4nt+4nt)=(12n+2)t$. The splitting of these $a_i$'s in $T_1$ and $T_2$ would 
return us a solution for Equal-size Set-Partition with Rationals, i.e.,
if $a_i$ is in $T_1$ then $E_1\leftarrow E_1\cup \{a_i\}$, and
if $a_i$ is in $T_2$ then $E_2\leftarrow E_2\cup \{a_i\}$; moreover
$\sum_{a\in E_1}a=\sum_{b\in E_2}b=t$.

This reduction obviously takes linear time, hence the theorem is proven.
\end{proof}

We comment that with this proof, a variation of 2-MST, e.g., even if $c_1$ and $c_2$ are not given in advance, remains NP-hard. Also, with a minor modification we could show that Two-MST is NP-hard under the $L_1$ distance as well. In addition, Two-TSP is obviously NP-hard: given a set of points $P$ and suppose we want to compute a TSP of $P$. We just create another copy of $P$, $P'$ and translate $P'$ to be far away from $P$ (say, by a distance of 10 times the diameter of $P$), then fix a point $p$ in $P$ as $c_1$ and the corresponding copy $p'$ in $P'$ as $c_2$. Then the optimal solution for TSP for $P$ is exactly the same
as the Two-TSP solution for $P\cup P'\cup\{c_1,c_2\}$. 

In the next section, we present constant-factor approximations for Two-MST and Two-TSP.

\begin{figure}[htbp]
\psfrag{T2n-1}{$T_{2n-1}$}
\psfrag{T2n}{$T_{2n}$}
\psfrag{p2n}{$p_{2n}$}
\psfrag{B2n}{$B_{2n}$}
\psfrag{10a2n}{$10a_{2n}$}
\psfrag{24ai}{$24a_{i}$}
\psfrag{a2n+1}{$b_{2n+1}$}
\psfrag{b2n+1}{$b_{2n+2}$}
\psfrag{q}{$q$}
\psfrag{r}{$r$}
\psfrag{2t}{$2t$}
\psfrag{4nt}{$4nt$}
    \centering
    \includegraphics[width=0.65\textwidth]{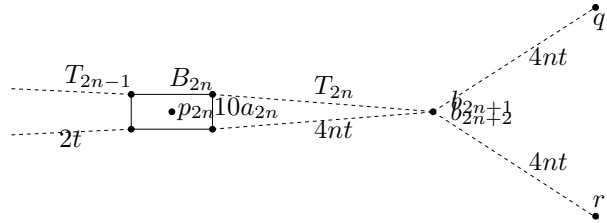}
    \caption{Illustration for the reduction from Equal-size Set-Partition with Rationals to 2-MST, the right part.}
   \label{fig3}
\end{figure}

\section{Constant-factor Approximations for Two-MST and Two-TSP}

Note that, when the coordinates of points are rational, both Two-Squirrel and Dichotomy Two-Squirrel admit a FPTAS. This can be done, as suggested by Wojtczak \cite{DBLP:conf/csr/Wojtczak18} for the corresponding counterparts of Set-Partition (with rationals), by first designing
a polynomial-time dynamic programming algorithm through
scaling and rounding the distances to integers, obtaining the corresponding optimal solutions, and then tracing back to obtain the approximate solutions. This method does not work for 2-MST and 2-TSP. In fact, in this section, we have to design constant-factor approximations for them separetely.

\subsection{A 3.6402-Approximation for 2-MST}

Recall that we are given a set $P$ of $2n$ points in the plane, and two sites $c_1$ and $c_2$,
partition $P$ into two sets $P_1$ and $P_2$ each of size $n$ such that the maximum weight of MST's for $P\cup\{c_1\}$ and $P_2\cup \{c_2\}$ is minimized.

\vspace{2mm}
{\bf Algorithm for 2-MST}.
\begin{enumerate}
\item
Compute $T$, a MST of $P\cup\{c_1,c_2\}$, using Kruskal's algorithm. 
Let $e$ be the last edge added to $T$ joining trees $T_1$ and $T_2$.

\item
If $c_1$ and $c_2$ are in different trees $T_i$ and each tree $T_i$ contains exactly $n+1$ vertices, then $(T_1,T_2)$ is a solution. 
Otherwise do Step 3.

\item
Split $P$ into $P'_1$ and $P'_2$ each of size $n$ arbitrarily.
Compute $T'_1,T'_2$, an MST of $P'_1\cup\{c_1\},P'_2\cup\{c_2\}$, respectively.
Return $(T'_1,T'_2)$.
\end{enumerate}

\begin{theorem}
The algorithm for 2-MST has an approximation ratio 3.6402 and it runs in $O(n\log n)$ time.
\end{theorem}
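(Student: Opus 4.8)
The plan is to split the analysis according to which branch of the algorithm produces the output, dispose of the running time first, and then charge each branch against a suitable lower bound on $OPT$. The running time is immediate: Step~1 is a single run of Kruskal's algorithm on the $2n+2$ points, costing $O(n\log n)$; Step~2 is a linear-time check of the two component sizes and of which side each $c_i$ lies on; and Step~3 computes two further MSTs, again $O(n\log n)$. Hence the whole procedure runs in $O(n\log n)$, and it remains to establish the ratio.

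I would rest the ratio on three facts. (i) \emph{Min $2$-forest:} deleting the heaviest edge of an MST yields the cheapest spanning forest with two components; since for any admissible partition $T_1^\ast\cup T_2^\ast$ is a $2$-component spanning forest of $V:=P\cup\{c_1,c_2\}$, we get $w(T_1^\ast)+w(T_2^\ast)\ge w(T)-w(e)$, where $e$ is the last (hence heaviest) Kruskal edge, and therefore $OPT\ge \tfrac12\big(w(T)-w(e)\big)$. (ii) \emph{Subset doubling:} for any $S\subseteq V$ one has $w(\mathrm{MST}(S))\le 2\,w(\mathrm{MST}(V))$, by doubling $\mathrm{MST}(V)$, taking an Euler tour, and short-cutting onto $S$. (iii) The Chung--Graham lower bound on the Steiner ratio, $w(\mathrm{SMT}(V))\ge \rho\, w(\mathrm{MST}(V))$ with $\rho\ge 0.8241$, which lets me pass between the MST and the Steiner minimal tree of $V$ while losing only the factor $1/\rho$.

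For the easy branch (Step~2), correctness is direct: removing $e$ yields subtrees $T_1,T_2$ that partition $P$ into two sets of size $n$ with $c_1,c_2$ on opposite sides, so $(T_1,T_2)$ is feasible, and $\max\{w(T_1),w(T_2)\}\le w(T_1)+w(T_2)=w(T)-w(e)\le 2\,OPT$ by fact~(i). Thus whenever Step~2 fires the ratio is at most $2<3.6402$, and the interesting constant must come entirely from Step~3.

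The whole difficulty is therefore concentrated in Step~3, and this is where the stated constant $3.6402=3/0.8241=3/\rho$ is forced. On the upper side, fact~(ii) applied to the arbitrary balanced partition gives $\max\{w(T_1'),w(T_2')\}\le 2\,w(\mathrm{MST}(V))\le \tfrac{2}{\rho}\,w(\mathrm{SMT}(V))$. The hard part will be the matching lower bound: I must show that \emph{when Step~2 fails} one has $OPT\ge \tfrac23\,w(\mathrm{SMT}(V))$ (equivalently $OPT\ge \tfrac{2\rho}{3}\,w(\mathrm{MST}(V))$), after which the two estimates combine to $\max\{w(T_1'),w(T_2')\}\le \tfrac{2}{\rho}\cdot\tfrac32\,OPT=\tfrac{3}{\rho}\,OPT=3.6402\,OPT$. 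The generic inequality $2\,OPT\ge w(T_1^\ast)+w(T_2^\ast)\ge w(\mathrm{SMT}(V))-d(c_1,c_2)$ is by itself too weak, so the crux is to extract extra leverage from the \emph{reason} Step~2 failed: either the heaviest-edge cut of $T$ leaves $c_1,c_2$ on the same side, or it separates them but is unbalanced. In both situations $T$ admits no cheap balanced center-separated cut, which forces any feasible partition to route one of its two trees across an expensive part of the configuration; quantifying this forced detour against $w(\mathrm{SMT}(V))$ and optimizing with the Chung--Graham value $\rho=0.8241$ is what yields $3.6402$. I expect this forced-detour lower bound, together with the case analysis on why Step~2 failed, to be the technically delicate part of the argument.
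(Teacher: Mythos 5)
There is a genuine gap, and it comes from where you place the Chung--Graham bound. You bound the Step-3 output by \emph{doubling}: $\max\{w(T_1'),w(T_2')\}\le 2\,w(T)$, and then you must compensate by proving the lower bound $OPT\ge\tfrac23 w(\mathrm{SMT}(V))$, i.e.\ $OPT\ge\tfrac{2\rho}{3}w(T)\approx 0.549\,w(T)$. You explicitly defer this ``forced-detour'' lemma as the technically delicate part, so the proof is not complete; worse, the lemma is substantially stronger than what the cut-based argument delivers. What one actually gets from the structure of a failed Step~2 is the following: letting $v_1\in V_1^*$, $v_2\in V_2^*$ be the closest pair across the optimal partition, the union $T_1^*\cup T_2^*\cup\{(v_1,v_2)\}$ is a spanning tree, so $w(T_1^*)+w(T_2^*)+d(v_1,v_2)\ge w(T)$; and $OPT\ge d(v_1,v_2)$, because otherwise every edge of $T_1^*,T_2^*$ would be lighter than every cross-cut edge, Kruskal's last edge $e$ would satisfy $w(e)<d(v_1,v_2)$, yet the tree among $T_1,T_2$ that mixes $V_1^*$ and $V_2^*$ contains a cross-cut edge of weight at most $w(e)$ --- a contradiction. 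This yields only $OPT\ge w(T)/3$, a factor $1.648$ short of what your plan requires, and a configuration with $w(T_1^*)\approx w(T_2^*)\approx d(v_1,v_2)$ shows there is no slack to squeeze out of this argument.

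The fix is to use the Steiner ratio on the \emph{upper}-bound side instead of doubling. Since $T$ spans $P_1'\cup\{c_1\}$ with the remaining $n+1$ points acting as Steiner points, $w(\mathrm{SMT}(P_1'\cup\{c_1\}))\le w(T)$, and the Chung--Graham bound $\mathrm{SMT}\ge 0.8241\cdot\mathrm{MST}$ gives $w(T_1')\le w(T)/0.8241\le 1.2134\,w(T)$ (and likewise for $T_2'$). Combined with the achievable lower bound $w(T)\le 3\cdot OPT$ above, this gives $APP\le 1.2134\cdot 3\cdot OPT=3.6402\cdot OPT$ with no further lemma needed. Your running-time analysis and your Step-2 case are fine (the paper in fact observes the Step-2 output is optimal, not merely a $2$-approximation, but either suffices).
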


\begin{proof}
Let $T$ be an MST of $P\cup \{c_1, c_2\}$.  
If the algorithm stops at Step 2, then the two trees are optimal.

Suppose that the algorithm stops at step 3. 
Let $T_1^*$ and $T_2^*$ be two optimal trees with $c_1\in T_1^*$ and $c_2\in T_2^*$.
Viewing points in $P'_2\cup\{c_2\}$ as Steiner points, by the bound of Chung and Graham \cite{CG85}, we have $w(T'_1) \leq (1/0.82416874)\cdot w(T)\leq 1.2134\cdot w(T)$.
Similarly, we also have 
$w(T'_2) \leq 1.2134\cdot w(T)$.
The approximate solution {APP} satisfies ${APP} \leq 1.2134\cdot w(T)$.

To obtain the final factor, let {OPT} be the maximum weight of $T_1^*$ and $T_2^*$.
Let $V_1^*$ and $V_2^*$ be the sets of vertices of $T_1^*$ and $T_2^*$, respectively.
Let $v_1$ and $v_2$ be two vertices of $V_1^*$ and $V_2^*$ such that the distance between $v_1$ and $v_2$ is minimized.
By taking the union of $T_1^*$ and $T_2^*$, and adding an edge between $v_1$ and $v_2$, we obtain a spanning tree.
Thus, $w(T_1^*) + w(T_2^*) + d(v_1,v_2) \geq w(T)$, since $T$ is a minimum spanning tree of $P\cup\{c_1,c_2\}$. 

Next we show that ${OPT} \geq d(v_1, v_2)$.
Suppose to the contrary that ${OPT}<d(v_1, v_2)$. 
Then the weight of every edge in $T_1^*$ and $T_2^*$ is less than $d(v_1, v_2)$.
If $d(v_1, v_2)$ is less than or equal to the weight of edge $e$ found in Step 1, then $T_1^*$ and $T_2^*$ will be found in Step 2.
Therefore $d(v_1, v_2) > w(e)$.
Since $(T_1,T_2)\ne (T_1^*,T_2^*)$, there are two points $u_1\in T_1^*$ and $u_2\in T_2^*$ such that $T_1$ or $T_2$ contains both $u_1$ and $u_2$.
Then the path between $u_1$ and $u_2$ in this tree contains an edge across the cut of $V_1^*$ and $V_2^*$. 
Since its weight is at most $w(e)$, it contradicts $d(v_1, v_2) > w(e)$.  

Thus we obtain $$
w(T_1^*) + w(T_2^*) + d(v_1,v_2) \leq {OPT} + {OPT} + {OPT} = 3\cdot{OPT}.$$
Combined with the above, this gives 
$APP \leq 1.2134\cdot w(T) \leq 1.2134\cdot (3\cdot {OPT})=3.6402\cdot{OPT}.$
\qed
\end{proof}

\subsection{A $(4+\varepsilon)$-Approximation for Two-TSP}

First let $P_1$ be the subset of points closer to $c_1$, and $P_2$ the subset of points closer to $c_2$ (ties are broken arbitrarily). For our algorithm, we first compute an MST $T$ of $P\cup\{c_1,c_2\}$, using Kruskal's algorithm. 
Let $e$ be the last edge added to $T$ joining trees $T_1$ and $T_2$.
If $c_1$ and $c_2$ are in different trees $T_i$ and each tree $T_i$ contains exactly $n+1$ vertices, then compute the approximate TSP $O_i$ for points on $T_i, i=1,2$, by doubling the edges in $T_i$, and return $(O_1,O_2)$ as a solution. (Note that $w(O_i)\leq 2\cdot {OPT}$, where {OPT} is the optimal solution value for Two-TSP. We could use a better approximation for this part, but it does not affect the final approximation factor.)

If the above condition does not hold, then let $O^*$ be an optimal TSP of $P\cup \{c_1, c_2\}$.  
Traverse $O^*$ from $c_1$ either in CW or CCW order to hit the
$n$-th point $q\in P$ on $O^*$ without hitting $c_2$ ($q$ must exist in one direction, say CCW). $O_1$ is obtained by connecting 
$c_1$ and $q$; and $O_2$ is obtained by connecting the two points neighboring $c_1$ and $q$ on $O^*$ but do not belong to $O_1$.
Clearly, we have $w(O_1)\leq w(O^*)$, as $d(c_1,q)$ is bounded above by the path from $q$ to $c_1$ (in CCW order).
Similarly, we have $w(O_1)\leq w(O^*)$.
Note that since computing $w(O^*)$ is NP-hard, we could make use of any existing PTAS \cite{DBLP:journals/jacm/Arora98,DBLP:journals/siamcomp/Mitchell99}, hence we have
$w(O_1)\leq (1+\epsilon)\cdot w(O^*)$ and
$w(O_2)\leq (1+\epsilon)\cdot w(O^*)$. Then, the approximation solution value {APP} satisfies that
$${APP}=\max\{w(O_1),w(O_2)\}\leq (1+\epsilon)w(O^*).$$

To obtain the final factor, let $O^*_1$ and $O^*_2$ be the two TSP's of the optimal solution, and let {OPT} be the maximum weight of $O^*_1$ or $O^*_2$.
By taking the union of $O^*_1$ and $O^*_2$, and doubling the edge between $c_1$ and $c_2$, we obtain a TSP for $P\cup\{c_1,c_2\}$.
Thus, $w(O^*_1) + w(O^*_2) + 2d(c_1,c_2) \geq w(O^*)$, since $O^*$ is an optimal TSP for $P\cup\{c_1,c_2\}$.

Next we show that ${OPT} \geq d(c_1, c_2)$.
If the optimal solution splits $P$ into $P_1$ and $P_2$, our algorithm just returns a factor-2 approximation of it. Now assume that the optimal solution does not do that.  
This means that $O^*_1$ has a point of $P_2$, or $O^*_2$ has a point of $P_1$.
Let $p \in O^*_1 \cap P_2$, then the two paths from $c_1$ to $p$ on $O^*_1$ shows that ${OPT} \geq 2\cdot d(c_1,p)\geq d(c_1,c_2)$. 
The same inequality holds if $p \in O^*_2 \cap P_1$.

Thus we obtain $$w(O^*)\leq w(O^*_1) + w(O^*_2) + 2d(c_1, c_2) \leq {OPT} + {OPT} + 2\cdot {OPT} = 4\cdot{OPT}.$$
Combined with the above, this gives 
$APP \leq (1+\epsilon)\cdot w(O^*) \leq (1+\epsilon)\cdot (4\cdot {OPT})=(4+\varepsilon)\cdot{OPT},$
by setting $\varepsilon=4\epsilon$.

The running time of the algorithm is dominated by the PTAS for computing the TSP of a set of $n$ points \cite{DBLP:journals/jacm/Arora98,DBLP:journals/siamcomp/Mitchell99}. Hence we have the following theorem.

\begin{theorem}

Two-TSP can be approximated with a factor-($4+\varepsilon$) approximation algorithm which runs in polynomial time (in $n$ and $1/\varepsilon$).
\end{theorem}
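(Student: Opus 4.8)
The plan is to combine two standard ingredients: a minimum-spanning-tree based partition that disposes of an easy balanced case, and a Euclidean TSP PTAS (Arora \cite{DBLP:journals/jacm/Arora98} or Mitchell \cite{DBLP:journals/siamcomp/Mitchell99}) that handles the general case. The analysis then reduces to relating the cost of a single near-optimal tour of all $2n+2$ points to the two-tour optimum $OPT$, the key inequality being $w(O^*) \le 4\,OPT$ where $O^*$ is an optimal tour of $P \cup \{c_1, c_2\}$.

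First I would fix the partition $P_1, P_2$ of $P$ by assigning each point to its nearest center (ties broken arbitrarily), and compute an MST $T$ of $P \cup \{c_1, c_2\}$ by Kruskal's algorithm, tracking the last edge $e$ that merges the two subtrees $T_1, T_2$. If this split already places $c_1, c_2$ in distinct trees with exactly $n+1$ vertices each, I double the edges of each $T_i$ and shortcut to a Hamiltonian tour $O_i$; since doubling costs at most twice the tree weight, this branch returns a solution within the target factor and is not the bottleneck. Otherwise I run the PTAS on all of $P \cup \{c_1, c_2\}$, obtaining a tour of weight at most $(1+\epsilon)\,w(O^*)$; traversing it from $c_1$ in the direction in which the $n$-th subsequent point of $P$, call it $q$, is reached before $c_2$, I form $O_1$ by joining $c_1$ to $q$ with a chord and $O_2$ from the complementary arc by joining the two endpoints adjacent to $c_1$ and $q$. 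Because each chord is no longer than the arc it short-cuts, $w(O_1), w(O_2) \le (1+\epsilon)\,w(O^*)$, so $APP = \max\{w(O_1), w(O_2)\} \le (1+\epsilon)\,w(O^*)$.

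The core of the analysis is the lower bound $w(O^*) \le 4\,OPT$. Let $O_1^*, O_2^*$ be the two optimal tours with $OPT = \max\{w(O_1^*), w(O_2^*)\}$. Taking the union of $O_1^*$ and $O_2^*$ and doubling the segment between $c_1$ and $c_2$ yields a connected Eulerian structure on all points that shortcuts to a single tour, so by optimality of $O^*$ we get $w(O^*) \le w(O_1^*) + w(O_2^*) + 2\,d(c_1, c_2)$. It then remains to prove $OPT \ge d(c_1, c_2)$. If the optimal solution respects the nearest-center partition $P_1, P_2$, the doubling branch already returns a factor-$2$ approximation; otherwise some $p \in P_2$ lies on $O_1^*$ (or symmetrically), and the two sub-paths of $O_1^*$ from $c_1$ to $p$ force $w(O_1^*) \ge 2\,d(c_1, p) \ge d(c_1, p) + d(p, c_2) \ge d(c_1, c_2)$, where $d(c_1, p) \ge d(c_2, p)$ holds because $p$ is closer to $c_2$. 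Substituting $w(O_1^*), w(O_2^*), d(c_1, c_2) \le OPT$ gives $w(O^*) \le 4\,OPT$, and combining with $APP \le (1+\epsilon)\,w(O^*)$ and setting $\varepsilon = 4\epsilon$ yields $APP \le (4+\varepsilon)\,OPT$.

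The step I expect to be the main obstacle is establishing $OPT \ge d(c_1, c_2)$ cleanly, since it is the only place coupling the geometry of the two centers with the structure of an optimal solution; the subtlety is that it need not hold verbatim for every optimal solution, so the argument must split into the partition-respecting case (absorbed by the doubling branch) and the crossing case (handled by the two-path bound). The remaining bookkeeping is to verify that the chosen traversal direction of the near-optimal tour always produces an $O_1$ with exactly $n+1$ vertices, so that $(O_1, O_2)$ is a feasible split. The running time is dominated by the PTAS, hence polynomial in $n$ and $1/\varepsilon$.
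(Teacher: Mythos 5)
Your proposal is correct and follows essentially the same route as the paper's own proof: the MST-based balanced-split check, the PTAS tour of $P\cup\{c_1,c_2\}$ split at the $n$-th point reached from $c_1$ before $c_2$, the bound $w(O^*)\le w(O_1^*)+w(O_2^*)+2d(c_1,c_2)$, and the case split (partition-respecting vs.\ crossing) to establish $OPT\ge d(c_1,c_2)$ all match the paper's argument, including the final substitution $\varepsilon=4\epsilon$.
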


In the next section, we present some polynomial-time solvable cases for
Two-MST.

\section{Polynomially-solvable Cases for Two-MST}

\subsection{The 1-dimensional case: all data points are on a line}

First consider the 1-dimensional case where $P\cup\{c_1,c_2\}\subset R$, the set of real numbers.
WLOG, assume that $x(c_1)\le x(c_2)$. Let $P=\{p_1,\dots,p_{2n}\}$ be sorted by $x$-coordinates.
It can be easily shown that the optimal partition of $P$ is $P_1=\{p_1,\dots,p_n\}$ and $P_2=\{p_{n+1},\dots,p_{2n}\}$.
Hence this version can be solved in $O(n\log n)$ time with sorting. And this is optimal as we need to return the
two MST's which together give the sorted ordering of $P$.

\subsection{Points on the X- and Y-axis and under the Manhattan distance}

In this subsection, we study an interesting variation when
the distance is Manhattan ($L_1$) and all the data points (including $c_1$ and $c_2$) are on the X- and Y-axis. We call this version the X+Y case, which we show
to be solvable in polynomial time as follows. 

The following definition hold for both $L_1$ and $L_2$. The {\em maximal} segment of a tree $T_i$ on an half-axis $H=\{(x,0)~|~x\ge 0\}$ (resp. $H=\{(0,x)~|~x\ge 0\}$) is a segment between the leftmost (resp. bottom-most) vertex of $T$ in $H$ and the rightmost (resp. top-most) vertex of $T$ in $H$ (if $c_i$ is not on $H$); otherwise $H$ contains at most two maximal segments: one is from the leftmost (resp. bottom-most) vertex to the vertex before $c_i$, and the second one is from the vertex after $c_i$ to the rightmost (resp. top-most) vertex. (Similar definitions can be made for the half-axis along $-\infty$ directions.) We first prove the following lemma. 

\begin{lemma}
When all the points in $P$ and two sites $c_1$ and $c_2$ are on the X- and Y-axis, for 2-MST under the $L_1$ metric there is an optimal solution such that all the edges in the two MST's $T_1$ and $T_2$ are on the X-axis and Y-axis; moreover, the maximal segment of $T_1$ and $T_2$ on any half-axis are disjoint.
\label{lem3}
\end{lemma}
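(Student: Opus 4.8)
The plan is to start from an arbitrary optimal pair $(T_1,T_2)$ and transform it, through a finite sequence of weight-non-increasing local modifications, into an optimal pair that satisfies both asserted properties. For the first property (all edges on the axes), observe that two points on a common half-axis are already joined by a segment lying on that axis, while two points on different half-axes, say $(a,0)$ and $(0,b)$ with $a,b>0$, have $L_1$-length $a+b$ equal to the length of the L-shaped path $(a,0)\to(0,0)\to(0,b)$, which is drawn on the X- and Y-axes; hence every MST edge admits an axis-aligned realization of the same weight. First I would apply a standard MST exchange: whenever some $T_i$ contains an edge between two non-consecutive points on one half-axis, I replace it by an edge joining two consecutive points of no larger $L_1$-length, preserving connectivity and minimality. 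After finitely many exchanges each $T_i$ consists of consecutive-point segments along the half-axes together with L-paths joining innermost points of distinct half-axes through the origin, so the pair is drawn entirely on the axes.

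To control weights I record, for each tree $T_i$ and half-axis $H$, the \emph{reach} $R_i(H)$, the distance from the origin to the farthest vertex of $T_i$ on $H$, and the innermost distance $r_i(H)$. The on-axis realization lets me write $w(T_i)$ as the sum over half-axes of the spans $R_i(H)-r_i(H)$ plus a near-origin junction term that connects the innermost vertices and the site $c_i$, so that $w(T_i)$ is governed by the reaches $R_i(H)$. I then show that, among optimal solutions, one can be chosen in which the vertices of $T_1$ and of $T_2$ occupy disjoint intervals on every half-axis; since these vertices lie on a ray, this is equivalent to the partition restricted to $H$ being a threshold, which is exactly the statement that the two maximal segments are disjoint. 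Fix $H$ and orient the target threshold so that the tree owning the outermost vertex of $H$ is placed on the far side. If the two trees interleave on $H$, there is a pair of vertices consecutive along $H$ whose nearer member lies in the far tree and whose farther member lies in the near tree; I swap the tree-membership of this adjacent inverted pair. The swap preserves $|T_1|=|T_2|=n$ and alters only the vertex sets on $H$; because the outermost vertex of $H$ is not moved, the reach of the far tree is unchanged, and since the element removed from the near tree is replaced by a strictly nearer one, the reach of the near tree does not increase. Thus neither $w(T_1)$ nor $w(T_2)$ grows and $\max\{w(T_1),w(T_2)\}$ does not increase. Each swap strictly lowers the number of inversions summed over all half-axes, so iterating terminates in an optimal solution whose maximal segments are disjoint on every half-axis.

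The main obstacle I anticipate is the near-origin junction term: whereas the spans and reaches behave cleanly under the swaps, an adjacent swap that involves the innermost vertex of a half-axis also changes which points must be joined across the origin, and I must verify that this junction cost (the small MST of the innermost vertices together with $c_i$) does not increase — here the fact that under $L_1$ every cross-axis connection already pays its through-origin cost is what keeps the accounting monotone. A secondary complication is the case in which $c_i$ lies on $H$: then $T_i$ has two maximal segments on $H$ separated by $c_i$, so the reach-and-threshold argument must be run on each side of $c_i$ separately, and every swap must respect that $c_i$ is pinned to $T_i$. Carefully discharging these two points, rather than the exchange mechanics themselves, is where the real work of the lemma lies.
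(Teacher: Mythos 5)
Your proposal follows essentially the same route as the paper: part one is the identical through-the-origin rerouting of cross-axis edges using the degeneracy of the $L_1$ triangle inequality, and part two is the same weight-non-increasing reassignment of an interleaved run on each half-axis (the paper performs it as a single block move of the middle segment, with a four-way case analysis on the positions of $c_1$ and $c_2$, rather than by adjacent inversion swaps). One caveat: your orientation rule ``the tree owning the outermost vertex of $H$ goes on the far side'' should really be ``the tree whose site lies outward on $H$ keeps the outward points,'' since when some $c_i$ sits on $H$ beyond the interleaved run that tree's cost on $H$ is governed by its innermost rather than its outermost vertex and your rule as stated can push its points the wrong way --- the paper's case analysis on the site positions is precisely what resolves this, and your ``$c_i$ pinned on $H$'' remark gestures at but does not fully discharge it.
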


\begin{proof}
The first part of the proof goes as follows. Suppose in one of the MST's, say $T_1$, one of the edge between $(x_i,0)$ and $(0,y_j)$ is through $(x_i,y_j)$. Then by the property of $L_1$, we could connect $(x_i,0)$ to $(0,y_j)$ through the origin $o=(0,0)$.
The new $T'_1$ either has the same weight as $T_1$ (when both the segments between $(0,0)$ and $(0,x_i)$, and between $(0,0)$ and $(y_j,0)$ are not in $T_1$), or has a smaller weight as $T_1$ (when one of the segments between $(0,0)$ and $(0,x_i)$, and between $(0,0)$ and $(y_j,0)$ is already in $T_1$).

We now assume that the optimal solution of this X+Y instance for 2-MST under $L_1$ metric preserves this property that we have just proved. Note that if $T_i$ is in the optimal solution of 2-MST, all the edges of $T_i$ must be on the two axes; and if $c_i$ is on one axis, say Y-axis, then
the points of $T_i$ on the Y-axis must form at most two maximal segments, with $c_i$ in between them.

For the second part of the proof, suppose on the half-axis $(o,(+\infty,0))$ of X-axis we
have segments of points like $P'=\langle p_{1,1},\cdots,p_{1,q}$, $p_{2,1},\cdots, p_{2,r}$, $p_{1,q+1},\cdots,p_{1,q+s}$ $\rangle$, where $p_{1,i}\in T_1$ and $p_{2,j}\in T_2$; moreover, we can assume that $c_1$ and $c_2$ are out of these segments (if not, we just choose the overlapping segments not containing $c_1$ and $c_2$). Then we can obviously switch the points in the middle without increasing the weight of $T_1$ and $T_2$ as follows.
If $c_1$ and $c_2$ are both to the left of $p_{1,1}$, we just assign the leftmost $r$ points in $P'$ to $T_2$ and the remaining ones to $T_1$;
if $c_1$ and $c_2$ are both to the right of $p_{1,q+s}$, we just assign the rightmost $r$ point in $P'$ to $T_2$ and the remaining ones to $T_1$.
If $c_1$ is to the left of $p_{1,1}$ and $c_2$ is to the right of $p_{1,q+s}$, we just assign the rightmost $r$ points in $P'$ to $T_2$ and the remaining ones to $T_1$.
If $c_1$ is to the right of $p_{1,q+s}$ and $c_2$ is to the left of $p_{1,1}$, we just assign the leftmost $r$ points in $P'$ to $T_2$ and the remaining ones to $T_1$.
\end{proof}

\begin{figure}
\psfrag{c1}{$c_1$}
\psfrag{c2}{$c_2$}
    \centering
    \includegraphics[scale=0.7]{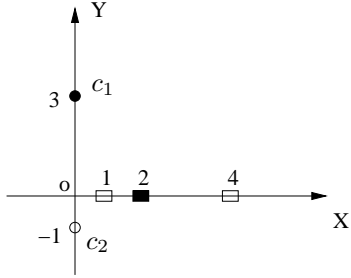}
    \caption{An example of optimal solution for 2-MST under the $L_1$ distance on the half-axis $(o,(\infty,0))$: in $T_1$, $c_1=(0,3)$
    takes the $2n$ points in the $2\varepsilon$-interval centered at $(2,0)$; in $T_2$, $c_2=(0,-1)$ takes the two groups of $n$ points in the
    $2\varepsilon$-intervals centered at $(1,0)$ and $(4,0)$. Following Lemma~\ref{lem3}, we could partition
    all the $n$ points near $(1,0)$ and the first $n$ points near $(2,0)$ to $T_1$ and the remaining points to $T_2$, without increasing the maximum
    weight of $T_1$ and $T_2$.}
  \label{fig:l1mst}
\end{figure}

If we denote a continuous segment of points of $P$ on the X-axis belonging to $T_1$ as $A$ and a segment of points of $P$ on the X-axis belonging to $T_2$ as $B$. The above lemma basically shows that in some optimal solution for 2-MST for this X+Y case, there is no
pattern like A-B-A on any of the half-axis in the X- and Y-axis.
Suppose there is an optimal solution with the A-B-A pattern:
making $c_1=(0,3)$ and $c_2=(0,-1)$ and three group of points (points are all within an interval of length $2\varepsilon$) around $(2,0)$ (with size $2n$), around
$(1,0)$ and $(4,0)$ (each with size $n$). One optimal solution is for $c_1$ to take the $2n$ points near $(2,0)$ and $c_2$ to take the remaining two groups of points (Fig.~\ref{fig:l1mst}). The optimal solution value is $5+\varepsilon$. But we could easily switch all the points near $(1,0)$ to $T_1$ and put the first half of $n$ points near $(2,0)$ to $T_1$. The weight of $T_2$ is unchanged and the weight of $T_1$ is decreased by $\varepsilon$.
We then have the following theorem.

\begin{theorem}
When all the points in $P$ and two sites $c_1$ and $c_2$ are on the X- and Y-axis, 2-MST under the $L_1$ metric can be solved in $O(n^4)$ time.
\end{theorem}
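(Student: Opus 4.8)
The plan is to convert Lemma~\ref{lem3} into a finite, low-dimensional search. By Lemma~\ref{lem3} there is an optimal solution in which every edge of $T_1$ and $T_2$ lies on the axes (so each tree is a ``plus'' routed through the origin) and, on each of the four half-axes, the points of $T_1$ and $T_2$ have disjoint maximal segments, i.e.\ no $A$-$B$-$A$ pattern occurs. First I would record the structural consequence: on a given half-axis the input points, sorted by distance from the origin, are split by a single \emph{cut} into an inner block and an outer block, one block going to each tree. Hence a candidate solution is described completely by four cut positions together with an \emph{orientation} per half-axis telling which tree receives the inner block; there are $2^{4}=16$ orientation patterns and $O(n)$ cut positions per half-axis.

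Next I would establish the cost decomposition. The key claim, which must be verified from the $L_1$ metric, is that under the star-through-origin structure the weight of a tree is separable across half-axes: $w(T)$ equals the sum over the four half-axes of the extent of $T$ on that half-axis measured from the origin. The subtle point is that the tree receiving the \emph{outer} block pays the full extent out to its outermost point even when the inner block, which belongs to the other tree, lies between the origin and the outer block; this is exactly why the overlapping connection edges do not lower the cost, and it is what makes the extent on each half-axis depend only on a single farthest point of that tree. Consequently, for a fixed orientation the outer tree's contribution on a half-axis is a constant (the outermost coordinate), while the inner tree's contribution is the coordinate of the cut point; both are $O(1)$ to evaluate after an $O(n\log n)$ sort and prefix bookkeeping.

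With the decomposition in hand, I would simply enumerate. For each of the $16$ orientation patterns and each of the $O(n^{4})$ quadruples of cut positions, compute $w(T_1)$, $w(T_2)$, and the number of input points assigned to $T_1$ in $O(1)$ time; discard the quadruple unless that number equals $n$, and otherwise update the running minimum of $\max\{w(T_1),w(T_2)\}$. Correctness has two halves: by the first part of Lemma~\ref{lem3} the separable formula computes exactly the MST weight of the partition encoded by a configuration (an axis-routed tree of that weight exists, and the MST cannot be cheaper), so every enumerated configuration is feasible and its reported value is its true objective; and since Lemma~\ref{lem3} guarantees an optimal solution free of $A$-$B$-$A$ patterns, that optimum is one of the enumerated configurations. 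Hence the minimum over the enumeration equals the optimum, and the running time is $16\cdot O(n^{4})=O(n^{4})$, dominating the sort.

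The step I expect to be the main obstacle is the careful treatment of the (at most two) half-axes that contain $c_1$ or $c_2$. On such a half-axis the center is a fixed vertex strictly between the origin and the far end, so the tree owning that center occupies the block containing it and its extent is the maximum of the cut coordinate and the center coordinate rather than merely the cut coordinate; one must also decide whether that tree needs to reach the origin at all, which happens precisely when it owns points on some other half-axis. Getting these extent formulas and the accompanying size count exactly right, including the corner cases where a center coincides with the origin or where both centers lie on the same axis, is where the bookkeeping is delicate, but each such case is still resolved in $O(1)$ per configuration and therefore leaves the $O(n^{4})$ bound intact.
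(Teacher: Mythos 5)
Your overall plan coincides with the paper's: use Lemma~\ref{lem3} to reduce the search to one cut per half-axis, enumerate the $O(n^4)$ cut quadruples (times $O(1)$ orientations), evaluate each configuration in $O(1)$ after sorting, and return the best feasible one. The enumeration, the feasibility count, and the handling of the half-axes containing $c_1,c_2$ are all fine and are in fact spelled out more carefully than in the paper. However, your cost decomposition contains a genuine error. You claim that $w(T)$ equals the sum over the four half-axes of the extent of $T$ measured from the origin. This treats the origin as a free junction vertex, i.e.\ it computes the \emph{Steiner} tree of $T$'s vertex set in the star metric, not its MST. An MST may only use input points as vertices, so if $T$ has vertices on $k\ge 3$ half-axes and none of them is at the origin, connecting the $k$ groups requires $k-1$ inter-group edges, each of which pays the distance of \emph{both} of its endpoints to the origin. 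The correct weight is $\sum_j M_j + (k-2)\min_j m_j$, where $m_j,M_j$ are the inner and outer distances of $T$'s block on the $j$-th occupied half-axis; the extra term $(k-2)\min_j m_j$ is what your formula drops. Concretely, take $c_1=(0,1)$ and two points of $T_1$ at $(1,0)$ and $(-1,0)$: every spanning tree of these three points has $L_1$-weight $4$, but your formula reports $3$. Since your algorithm compares configurations by these values, it can systematically underestimate trees spanning three or four half-axes and return a wrong optimum.

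The error is local and repairable: the correct formula above still depends only on the two extreme points of each block (which is all the paper's own, terser, argument invokes), so each configuration is still evaluable in $O(1)$ and the $O(n^4)$ bound survives. But as written, the step ``the overlapping connection edges do not lower the cost, and \dots the extent on each half-axis depend[s] only on a single farthest point'' is justifying the wrong quantity, and the proof of correctness of the evaluation step does not go through until the inter-half-axis connection cost is accounted for.
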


\begin{proof}
Following Lemma~\ref{lem3}, we can solve this problem in $O(n^4)$ time. We first sort the points of $P$ on the X-axis into $P_X$ and then we sort the points of $P$ on the Y-axis into $P_Y$. Then we enumerate all possible way to cut $P_X$ and $P_Y$ into at most 2 groups in each of the 4 half-axis. The total number is $O(n^4)$. 
Then, fixing each combination of cuts on the 4 half-axes, we check if a feasible solution exists, and if so, we compute the two MST's (including $c_1$ and $c_2$ respectively) in $O(1)$ time --- for each group we only need to compute its two extreme points when computing an MST. Consequently, we can compute the optimal solution of the 2-MST problem when all the points are on the X- and Y-axis in $O(n^4)$ time.
\end{proof}

\subsection{Points on the X- and Y-axis and under the Euclidean distance}

We now look at the X+Y case in this subsection by using the Euclidean distance. It turns out that the problem is much harder, as obviously not all the edges in an MST are along the X- and Y-axis.
In fact, different from the $L_1$ case, on any half-axis even the interleaving A-B-A scenario is possible for 2-MST in $L_2$ (Fig.~\ref{fig:2mst-worst}). However, we show that a pattern like A-B-A-B-A is not possible --- assuming $c_1$ and $c_2$ are not on the same half-axis. Based on that, we can give a polynomial time algorithm in $O(n^{13})$ time as well. First, we show a lemma regarding a property of an MST for points on the X- and Y-axis.

\begin{lemma}
When all the points in a set $Q$ are on the X- and Y-axis, in an MST of $Q$ under the $L_2$ metric, there are at most two consecutive segments of points of $Q$ on the X-axis (and respectively, Y-axis) not containing $c_i$.
\label{lem4}
\end{lemma}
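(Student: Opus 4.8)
The plan is to reduce the statement to a clean structural fact: in the MST $T$ of $Q$, the points lying on one half-axis are always linked into a single chain in sorted order, so that each half-axis contributes exactly one maximal segment. Concretely, call two points of $Q$ on, say, the positive $X$ half-axis \emph{consecutive} if no other point of $Q$ lies between them on that half-axis, and call a \emph{segment} a maximal run of consecutive points that are pairwise joined along the axis by edges of $T$. I first record the standard fact that any edge of $T$ between two collinear points of $Q$ must join a consecutive pair: otherwise, deleting that edge and reconnecting the two resulting components through the skipped middle point (which lies strictly between them and is therefore nearer to one endpoint) would produce a lighter spanning tree. The whole statement then follows once I show that, on each half-axis, every consecutive pair is actually an edge of $T$.

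The crux is a single application of the cut property. Fix a half-axis, say $\{(x,0):x\ge 0\}$, write its points of $Q$ as $x_1<x_2<\dots<x_k$ by distance from the origin, and fix a consecutive pair $x_j,x_{j+1}$. Consider the cut separating the \emph{outer} set $S=\{x_{j+1},\dots,x_k\}$ from the rest of $Q$. For any $x_\ell\in S$ and any $w\notin S$ I claim $d(x_\ell,w)\ge x_{j+1}-x_j$, with equality only for the pair $(x_{j+1},x_j)$: if $w=x_m$ with $m\le j$ then $d(x_\ell,x_m)=x_\ell-x_m\ge x_{j+1}-x_j$; and if $w$ lies on the $Y$-axis or on the negative $X$ half-axis, then $d(x_\ell,w)\ge x_\ell\ge x_{j+1}>x_{j+1}-x_j$, using that a point $(x,0)$ has distance at least $|x|$ to every point of the two axes off its own open half-axis. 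Hence $(x_j,x_{j+1})$ is the unique minimum-weight edge across this cut, so by the cut property it belongs to $T$. As $j$ was arbitrary, all consecutive edges of the half-axis are present and its points form one chain.

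It remains to count. Each of the (at most four) half-axes carries a single segment, so the $X$-axis carries at most two segments --- the chain of the positive half-axis and the chain of the negative half-axis, which may or may not be merged through the origin --- and likewise for the $Y$-axis. The same cut argument applied to the two cuts flanking $c_i$ shows that if $c_i$ lies on a half-axis it is an interior vertex of that half-axis's chain rather than a splitting point, so it does not create an extra segment; the segment through $c_i$ simply contains it. Therefore at most two segments on the $X$-axis can avoid $c_i$ (exactly the two half-axis chains when $c_i$ is off the $X$-axis, and at most one when $c_i$ sits on it), and symmetrically for the $Y$-axis, which is the claim.

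I expect the only delicate point to be the geometric inequality at the heart of the cut argument, together with its degenerate cases: ties in coordinates and a point sitting exactly at the origin must be excluded or broken by a generic perturbation, and one must check that $d(x_\ell,w)\ge x_{j+1}-x_j$ is \emph{strict} for every $w$ other than the intended inner neighbour, so that the minimum crossing edge is unique and hence forced into every MST. Conceptually the main thing to keep in mind is that the lemma concerns a single MST of a single set, so the interleaving of two trees' points that makes the $L_2$ version genuinely hard plays no role here; the chain structure is entirely a consequence of the fact that, moving outward along a ray, a point's cheapest connection toward the origin is always its immediate inner neighbour.
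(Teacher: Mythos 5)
Your proof is correct and establishes something at least as strong as the lemma, but by a genuinely different route from the paper. The paper argues by contradiction with a single edge exchange: if a half-axis carried two segments, the outer segment would have to reach the rest of the tree through some edge $(c,e)$ with $e$ off that half-axis, and since $|ce|>|oc|>|bc|$ (where $b$ is the outer endpoint of the inner segment and $o$ is the origin), replacing $(c,e)$ by $(b,c)$ yields a strictly lighter spanning tree. You instead apply the cut property to each of the cuts $\{x_{j+1},\dots,x_k\}$ versus the rest and show that the edge joining each consecutive pair on a half-axis is the unique lightest edge across its cut, so the sorted chain on every half-axis is forced into the MST. Both arguments rest on the same geometric fact --- a point on a ray is closer to its inner neighbour on that ray than to anything off the ray --- but yours is constructive where the paper's is a contradiction, and it additionally pins down the exact edge set of the MST restricted to each half-axis; the paper's remark after the lemma (that computing the MST reduces to identifying the point closest to $o$ on each half-axis) falls out of your version immediately. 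The degenerate cases you flag (a point of $Q$ at the origin, ties) afflict the paper's strict inequalities in exactly the same way, so you are not worse off there. One caveat worth stating explicitly: uniqueness of the lightest crossing edge is what lets you put the edge in \emph{every} MST; under ties you only get that \emph{some} MST has the chain structure, which is all that the lemma's downstream use requires but should be said.
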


\begin{proof}
In fact, we show a stronger statement: along any of the four half-axes not containing $c_i$, say $((0,0)$, $(+\infty,0))$, there is at most one segment of points in the MST $T$. WLOG, we refer to Fig.~\ref{fig:1mst}, where the MST connects two segments of
points through the edge $(a,d)$ and $(c,e)$. By triangle inequality, we could replace the edge $(c,e)$ with $(b,c)$. Then we would have a spanning tree with a smaller weight, as
$|ce|>|oc|>|bc|$. This contradicts the optimality of the assumed MST $T$.
\end{proof}
\begin{figure}
    \centering
    \includegraphics[scale=0.76]{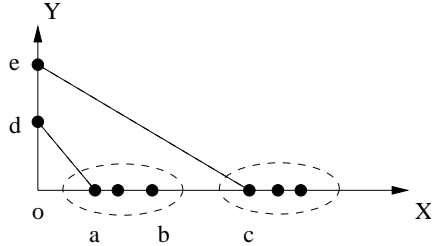}
    \caption{Illustration for the proof of Lemma~\ref{lem4}.}
    \label{fig:1mst}
\end{figure}

Note that the proof also implies that when computing the MST $T$, it all matters to identify the point closest to the origin $o$ in each of the half-axis, if $o$ is not in the input set $P$. We now explore more properties for 2-MST. 

\begin{lemma}
For the 2-MST problem under the $L_2$ metric, given each half-axis, say $((0,0)$, $(+\infty,0))$, except for the maximal segments connected with points on the Y-axis the optimal solution $T_1$ and $T_2$ must either partition the remaining points on the half-axis, possibly separated by ($c_i$, if any)
into two parts, or one of them takes all the points on it.
\label{lem5}
\end{lemma}

\begin{proof}
We focus on the half-axis $((0,0),(+\infty,0))$, and assume that the partition of points on this half-axis form five segments $[a,b]$, $[c,d]$, $[e,g]$, $[u,v]$ and $[w,z]$, where
$[a,b]$ and $[c,d]$ connect to some points/sites on the Y-axis, $[e,g]$ and $[w,z]$ belong to $T_1$ and $[u,v]$ belongs to $T_2$ (Fig.~\ref{fig:2mst} (I)). WLOG, assume that $c_1$ and $c_2$ are out of the interval $[c,z]$. In this case, similar to the proof of Lemma~\ref{lem3}, we show that we can decrease the number of segments of $T_1$ and $T_2$ without changing the connection $(a,h)$ and $(c,i)$
and without increasing the maximum weight of them. This can be done by partitioning the points in the segments/groups to the right of the last connection to the points in the Y-axis (i.e., segments $[e,g],[u,v]$ and $[w,z]$ to the right of point $c$ in Fig.~\ref{fig:2mst} (I)) into two parts; more precisely, partition these points into two parts according to the position of $c_1$ and $c_2$. In Fig.~\ref{fig:2mst} (II), when $c_1$ and $c_2$ are out of the interval $[c,z]$, then partition these points so that the leftmost
$|[u,v]|$ of them are merged with the segment $[c,d]$ for $T_1$ and the remaining ones are merged with $[w,z]$ for $T_2$. It is obvious that our goal is achieved.

Similar arguments obviously hold for the points between $c_1$ and $c_2$ (when $c_1$ and $c_2$ are on the same half-axis).
\end{proof}

\begin{figure}
\psfrag{n1}{$n_1$}
\psfrag{n2}{$n_2$}
\psfrag{n3}{$n_3$}
    \centering
    \includegraphics[scale=0.76]{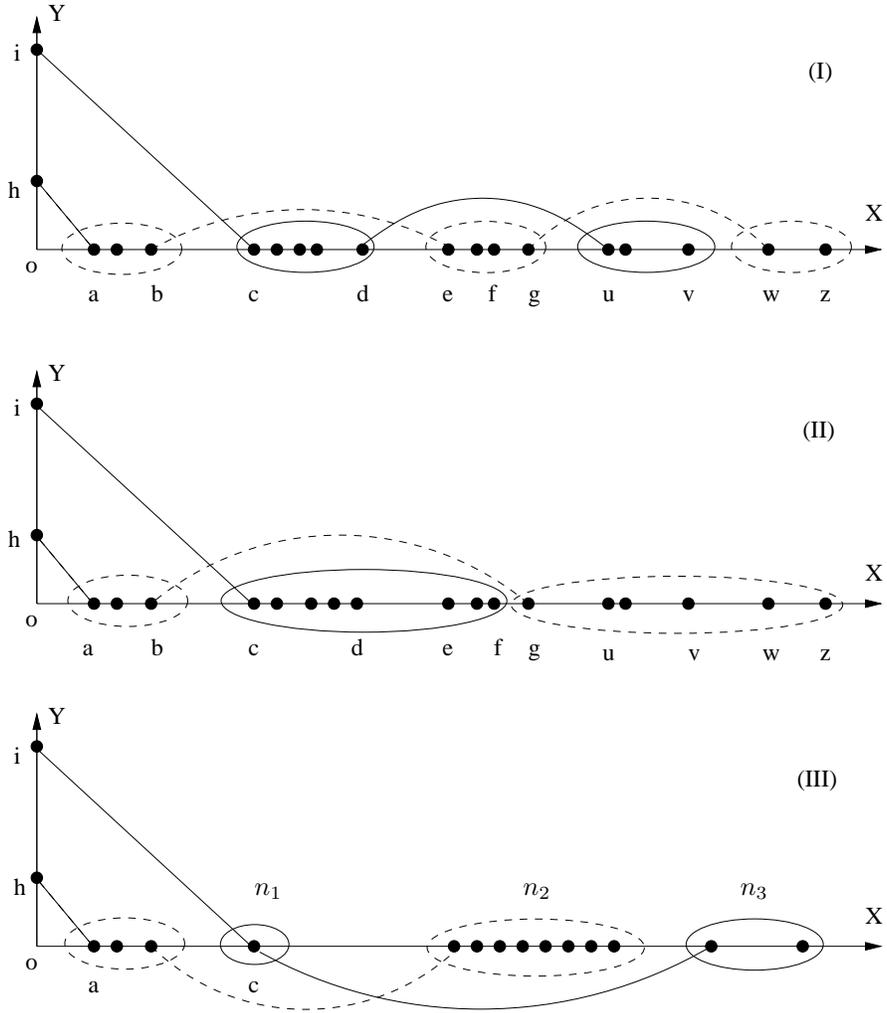}
    \caption{Illustration for the proof of Lemma~\ref{lem5}.}
    \label{fig:2mst}
\end{figure}

Fig.~\ref{fig:2mst} (III) shows that partition into two groups to the right of the segment containing $c$ could happen, as long as the number of points in the  rightmost three groups satisfy $n_2>n_1$ and $n_2>n_3$. This example cannot be further improved without changing the connection $(i,c)$ as in the example we set $n_1=1, n_2=8$ and $n_3=2$.
Note that the above lemma implies that, even excluding the segment bounded by $c_1$ and $c_2$ (when they are on the same half-axis), the pattern of A-B-A-B or B-A-B-A on any half-axis might still be possible, which enables us to design a polynomial-time algorithm. But we do not know yet if that pattern could really happen in real life. In Fig.~\ref{fig:2mst-worst}, we present an example where we do have the pattern A-B-A on an half-axis.

\begin{figure}
\psfrag{c1}{$c_1$}
\psfrag{c2}{$c_2$}
\psfrag{d1}{$9\sqrt{2}+1$}
    \centering
    \includegraphics[scale=0.7]{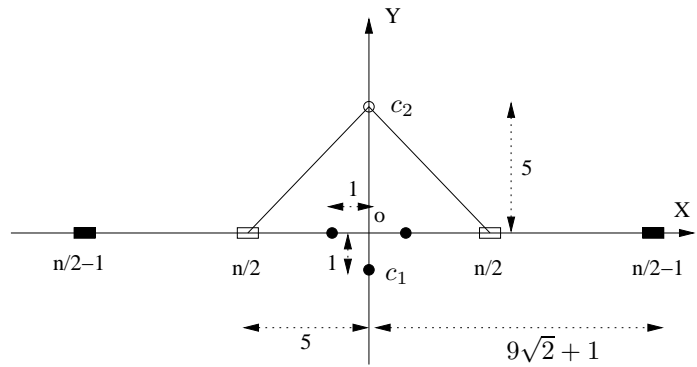}
    \caption{An example of optimal solution for 2-MST under the $L_2$ distance. In the example, $T_1$ includes black points composed of two blocks of $n/2-1$ points each (located in a small $2\varepsilon$-interval), plus two black points within distance 1 to the original $o$. They are all on the X-axis and together with $c_1=(0,-1)$ we form $T_1$, which has a weight of $10\sqrt{2}+2\varepsilon)$. $T_2$ is composed of two blocks of $n/2$ points on the X-axis (each within a $2\varepsilon$-interval located at a distance 5 from the origin), which are grouped with $c_2=(0,5)$ to form $T_2$. The weight of $T_2$ is also $10\sqrt{2}+2\varepsilon$. }
  \label{fig:2mst-worst}
\end{figure}

The algorithm for 2-MST for this X+Y case is then easy. First, ignore the case when $c_1$ and $c_2$ are on the same half-axis. We compute $T_1$ by at most 3-cutting the points and then selecting at most two segments along each of the 4 half-axes $((0,0),(+\infty,0))$, $((-\infty,0),(0,0))$, $((0,0),(0,+\infty))$, and $((0,0),(0,-\infty))$. This gives us $O((n^3)^4)=O(n^{12})$ number of partitions for $T_1$. Then if $c_1$ and $c_2$ are on the same half-axis, by Lemma~\ref{lem5}, we need one more cut to partition the points in between them. The total number of partitions for $T_1$ is $O(n^{13})$.
$T_2$ will then take the remaining segments. Hence, all pairs of $(T_1,T_2)$ can be enumerated in $O(n^{13})$ time. 
In an optimal solution such a set of at most 9 segments of points must exist, i,e., they cover exactly $n$ points and $c_1$. If we presort the points in the 4 half-axes, then this can be checked
in $O(1)$ time. Hence, $T_1$ can be computed in $O(1)$ time when its segments are given. Then, given each set of at most 9 (complementary) segments, we can compute the MST of the remaining points as $T_2$ in $O(1)$ time. This gives us the following theorem.

\begin{theorem}
When all the points in $P$ and two sites $c_1$ and $c_2$ are on the X- and Y-axis, 2-MST under the $L_2$ metric can be solved in $O(n^{13})$ time.
\end{theorem}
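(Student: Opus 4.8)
The plan is to turn the structural results of Lemma~\ref{lem4} and Lemma~\ref{lem5} into a bounded-size search space, and then to evaluate each candidate partition in constant time. First I would argue that there is an optimal solution whose combinatorial type is cheap to describe. After presorting the input along each of the four half-axes, Lemma~\ref{lem5} tells us that on every half-axis the points assigned to $T_1$ occupy at most two contiguous segments once we set aside the near-origin maximal segment that is routed to the perpendicular axis (and, if a center $c_i$ happens to lie on that half-axis, the break induced by $c_i$). Hence the restriction of the vertex set of $T_1$ to a single half-axis is specified by a constant number of cut positions among $O(n)$ sorted points.

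Next I would count the candidates. Placing at most three cuts on a presorted half-axis and assigning the resulting pieces alternately to $T_1$ and $T_2$ realizes the at-most-two-segments-each pattern permitted by Lemma~\ref{lem5}, giving $O(n^3)$ choices per half-axis and $O((n^3)^4)=O(n^{12})$ choices over the four half-axes. When $c_1$ and $c_2$ lie on the same half-axis, Lemma~\ref{lem5} still governs the points between them but now requires one additional cut, raising the count to $O(n^{13})$. Each such choice fixes the vertex set of $T_1$, and the vertex set of $T_2$ is taken to be its complement.

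Given a candidate, I would first check feasibility in $O(1)$ time after presorting, namely that the selected segments contain exactly $n$ points of $P$ together with $c_1$, so that the complement contains the remaining $n$ points together with $c_2$. For a feasible candidate I then evaluate $w(T_1)$ and $w(T_2)$ in $O(1)$ time, which is exactly where Lemma~\ref{lem4} is used: along any half-axis not containing the relevant center the MST consists of a single segment, so the weight of each tree is determined by the two extreme points on each half-axis and the center, a constant-size computation (including the edges routed through the origin). Scanning all $O(n^{13})$ candidates and keeping the one that minimizes $\max\{w(T_1),w(T_2)\}$ yields the claimed $O(n^{13})$ running time.

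The main obstacle is justifying that this bounded search space really contains an optimal solution, i.e.\ that some optimal solution avoids the $A$-$B$-$A$-$B$-$A$ pattern on every half-axis. This rests on the exchange argument behind Lemma~\ref{lem5}, but it must be applied with care: one has to verify that the re-partition used there never disturbs the edges routing the near-origin maximal segments to the perpendicular axis, that it increases neither tree's weight, and that after the exchange the single-segment MST description of Lemma~\ref{lem4} still governs the weight computation. Handling all placements of $c_1$ and $c_2$ (on the X-axis, on the Y-axis, or at the origin), and in particular the sub-case where they share a half-axis, is the part that forces the extra cut and demands the most careful case analysis.
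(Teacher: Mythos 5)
Your proposal matches the paper's own proof essentially step for step: the same enumeration of at most three cuts per half-axis (plus one extra cut when $c_1$ and $c_2$ share a half-axis) justified by Lemma~\ref{lem5}, the same $O((n^3)^4)=O(n^{12})$ and $O(n^{13})$ counts, and the same $O(1)$-time feasibility check and weight evaluation via presorting and Lemma~\ref{lem4}. The obstacle you flag at the end --- that the exchange argument excluding the $A$-$B$-$A$-$B$-$A$ pattern must be verified not to disturb the near-origin connections --- is indeed the load-bearing part, and the paper handles it (somewhat tersely) inside the proof of Lemma~\ref{lem5} rather than in the theorem's proof itself.
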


\section{Concluding Remarks}

In this paper, we focus the 2-MST problem which is a variation and generalization of the 2-squirrel problem we start with. While several results have been obtained, there are still many open questions. 
The first question is whether we could improve the approximation factor for 2-MST. 
Even with the current algorithm, we believe that the actual factor should be around 3. The second question is for the X+Y case of 2-MST under the Euclidean distance, we suspect that the $O(n^{13})$ upper bound is not
tight. There are possibly two ways to improve the bound: (1) if the pattern A-B-A-B on an half-axis (not containing $c_i$) can be shown to be impossible, then we only need at most two cuts
on each of them, leading to a running time of $O(n^9)$; (2) even if the pattern A-B-A-B on an half-axis (not containing $c_i$) is really possible, they might not appear in each half-axis
at the same time, then some improvement might still be possible.

\section*{Acknowledgments}

Part of this research was performed when the first and last author visited University of Hyogo in late 2022. We also thank Hiro Ito for some insightful comments.

\end{document}